\newcommand{\lambdax}{\lambda \hspace{-0.30mm} x}
\newcommand{\lambday}{\lambda \hspace{-0.55mm} y}
\newcommand{\Sort}{{\cal S}}
\newcommand{\Type}{{\cal T}_{\cal S}}
\newcommand{\gttype}{>_{\Type}}
\newcommand{\getype}{\geq_{\Type}}
\newcommand{\letype}{\leq_{\Type}}
\newcommand{\eqtype}{=_{\Type}}
\newcommand{\neqtype}{\neq_{\Type}}
\newcommand{\gtS}{\gttype}
\newcommand{\geS}{\getype}
\newcommand{\leS}{\letype}
\newcommand{\eqS}{\eqtype}
\newcommand{\Term}{\TT}
\newcommand{\gtF}{>_{\cal F}}
\newcommand{\geF}{\geq_{\cal F}}
\newcommand{\comment}[1]{}
\newcommand{\cqfd}{\hfill $\Box$}
\newcommand{\vect}[1]{\overline{#1}}
\newcommand{\ra}{\rightarrow}
\newcommand{\Nat}{\mbox{$\mbox{I}\!\mbox{N}$}}
\newcommand{\TFX}{\TT(\F,\X)}
\newcommand{\TT}{{\cal T}}
\newcommand{\F}{{\cal F}}
\newcommand{\X}{{\cal X}}
\newcommand{\Var}[1]{{\cal V}ar({#1})}
\newcommand{\lrps}[2]{\mathop{\longrightarrow}^{#1}_{#2}}
\newcommand{\gtpo}{\succ}
\newcommand{\gepo}{\succeq}
\newcommand{\gtsubt}{\rhd}
\newcommand{\gesubt}{\unrhd}
\newenvironment{ruleset}
               {$$ \begin{array}{llcr}}{\end{array} $$ }
\newcommand{\Update}[1]{
{\bf Update done here.}\\
}
\newcommand{\JPJ}[1]{}
\newcommand{\Polytype}{{\cal T}_{\Sort}}
\newcommand{\gtchorpoX}[1]{\mathop{\gtpo}^{#1}}
\newcommand{\gechorpoX}[1]{\mathop{\gepo}^{#1}}
\newcommand{\gtchorpoXtype}[1]{\mathop{\gtpo}^{#1}_{\Type}}
\newcommand{\gechorpoXtype}[1]{\mathop{\gepo}^{#1}_{\Type}}
\newcommand{\gtsubtarrow}{{~\gtsubt}_{\ra}~}
\newcommand{\newgtS}{\gtS^{\ra}}
\newcommand{\gtacc}{\gtsubt\!\!_{acc}}
\newcommand{\gtchorpoXacc}[1]{\gtchorpoX{#1}_{acc}}
\newcommand{\Update}[1]{
{\bf Update done here.}\\
}
\newcommand{\JPJ}[1]{}
\newcommand{\Polytype}{{\cal T}_{\Sort}}
\newcommand{\gtchorpoX}[1]{\mathop{\gtpo}^{#1}}
\newcommand{\gechorpoX}[1]{\mathop{\gepo}^{#1}}
\newcommand{\gtchorpoXtype}[1]{\mathop{\gtpo}^{#1}_{\Type}}
\newcommand{\gechorpoXtype}[1]{\mathop{\gepo}^{#1}_{\Type}}
\newcommand{\gtsubtarrow}{{~\gtsubt}_{\ra}~}
\newcommand{\newgtS}{\gtS^{\ra}}
\newcommand{\gtacc}{\gtsubt\!\!_{acc}}
\newcommand{\gtchorpoXacc}[1]{\gtchorpoX{#1}_{acc}}
\begin{document}

\title{\Large\bf The Computability Path Ordering:\\
the End of a Quest}

\author{
  Fr\'ed\'eric Blanqui\inst{1}
  \and
  Jean-Pierre Jouannaud\inst{2}
  \and
  Albert Rubio\inst{3}}
 
\institute{
  {INRIA, Campus Scientifique, BP 239, 54506 Vand{\oe}uvre-l\`es-Nancy Cedex, France}
\and
  {LIX, Projet INRIA TypiCal, \'{E}cole Polytechnique and CNRS,
  91400 Palaiseau, France}
\and
  {Technical University of Catalonia,
  Pau Gargallo 5, 08028 Barcelona, Spain}  
}

\comment{
This work was partly supported by the RNTL
project AVERROES, France-Telecom, and the CICYT project LOGICTOOLS,
ref.\ TIN2004-07925.
}

\pagestyle{plain}
\bibliographystyle{plain}
\maketitle

\thispagestyle{empty}
\large

\begin{abstract}
  In this paper, we first briefly survey automated termination proof
  methods for higher-order calculi. We then concentrate on the
  higher-order recursive path ordering, for which we provide an
  improved definition, the  Computability Path Ordering.
  This new definition appears indeed to capture the essence of
  computability arguments \emph{\`a la Tait and Girard}, therefore
  explaining the name of the improved ordering.

\end{abstract}

%

\section{Introduction}
\label{s:introduction}

This paper addresses the problem of automating termination proofs for
typed higher-order calculi.

The first attempt we know of goes back to Breazu-Tannen and Gallier
\cite{breazu89icalp} and Okada \cite{okada89issac}.  Following up a
pioneering work of Breazu\-Tannen who considered the confluence of
such calculi \cite{breazu88lics}, both groups of authors showed
independently that proving strong normalization of a polymorphic
lambda-calculus with first-order constants defined by first-order
rewrite rules was reducible to the termination proof of the set of
rewrite rules: beta-reduction need not be considered. Both works used
Girard's method based on {\em reducibility candidates} -also called
sometimes {\em computability predicates}. They then gave rise to a
whole new area, by extending the type discipline, and by extending the
kind of rules that could be taken care of.

The type discipline was extended soon later independently by Barbanera
and Dougerthy in order to cover the whole calculus of constructions
\cite{barbanera90ctrs,dougherty92ic}.

Higher-order rewrite rules satisfying the {\em general schema}, a
generalization of G\"odel's primitive recursion rules for higher
types, were then introduced by Jouannaud and Okada
\cite{jouannaud91lics,jouannaud97tcs} in the case of a polymorphic
type discipline. The latter work was then extended first by Barbanera
and Fernandez \cite{barbanera93tlca,barbanera93icalp} and finally by
Barbanera, Fernandez and Geuvers to cover the whole calculus of
constructions \cite{barbanera94lics}.

It turned out that recursors for {\em simple} inductive types could be
taken care of by the general schema, but arbitrary strict inductive
types could not, prompting for an extension of the schema, which was
reformulated for that purpose by Blanqui, Jouannaud and Okada
\cite{blanqui02tcs}. This new formulation was based on the notion of
{\em computability closure} of a term $f(\vect{s})$ headed by a
higher-order constant $f$, defined as a set containing the immediate
subterms $\vect{s}$ of $f(\vect{s})$ and closed under computability
preserving operations in the sense of Tait and Girard. Membership to
the general schema was then defined for an arbitrary rewrite rule as
membership of its right-hand side to the computability closure of its
left-hand side.

Besides being elegant, this formulation was indeed much more flexible
and powerful. By allowing for more expressive rules {\em at the object
level} of the calculus of constructions, it could handle many more
inductive types than originally. The general schema was finally
extended by Blanqui in a series of papers by allowing for {\em
recursive rules on types}, in order to cover the entire calculus of
inductive constructions including strong elimination rules
\cite{blanqui05mscs,blanqui05fi}.

The definition of the general schema used a precedence on higher-order
constants, as does Dershowitz recursive path ordering for first-order
terms \cite{dershowitz82tcs}. This suggested generalizing this
ordering to the higher-order case, a work done by Jouannaud and Rubio
in the case of a simple type discipline under the name of HORPO
\cite{jouannaud99lics}. Comparing two terms with HORPO starts by
comparing their types under a given well-founded quasi-ordering on
types before to proceed recursively on the structure of the compared
terms, comparing first in the precedence the higher-order constants
heading both terms.  Following the recursive path ordering tradition,
a subterm of the left-hand side could also be compared with the whole
right-hand side, regardless of the precedence on their heads.

HORPO was then extended to cover the case of the calculus of
constructions by Walukiewicz \cite{walukiewicz03jfp}, and to use
semantic interpretations of terms instead of a precedence on function
symbols by Borralleras and Rubio \cite{borralleras01lpar}.  HORPO was
also improved by the two original authors in two different ways: by
comparing in the so-called subterm case an arbitrary term belonging to
the computability closure of the left-hand side term with the right-hand
side term, therefore generalizing both HORPO and the general schema;
and by allowing for a restricted polymorphic discipline
\cite{jouannaud07jacm}.  An axiomatic presentation of the rules
underlying HORPO can be found in \cite{goubault01csl}. A more recent
work in the same direction is \cite{dershowitz08pc}.

The ordering and the computability closure definitions turn out to
share many similar constructs, raising expectations for a simpler and
yet more expressive definition, instead of a pair of mutually
inductive definitions for the computability closure and the ordering
itself, as advocated in \cite{blanqui06lpar-horpo}. These expectations
were partly met, on the one hand in \cite{blanqui07jpbirthday} with a
single computability oriented definition, and on the other hand in
\cite{blanqui07lpar} where a new, syntax oriented recursive definition
was given for HORPO. In contrast with the previous definitions, bound
variables were handled explicitly by the ordering, allowing for
arbitrary abstractions in the right-hand sides.

A third, different line of work was started by van de Pol and
Schwichtenberg, who aimed at (semi)-automating termination proofs of
higher-order rewrite rules based on higher-order pattern matching, a
problem generally considered as harder as the previous one
\cite{vandepol93hoa,vandepol95tlca,vandepol96phd}. Related attempts
with more automation appear in \cite{loria92ctrs,jouannaud98tcs}, but
were rather unconclusive for practical applications. The general
schema was then adapted by Blanqui to cover the case of higher-order
pattern matching \cite{blanqui00rta}. Finally, Jouannaud and Rubio
showed how to turn any well-founded ordering on higher-order terms
including beta and eta, into a well-founded ordering for proving
termination of such higher-order rules, and introduced a very simple
modification of HORPO as an application of this result
\cite{jouannaud06rta-termin}.

A fourth line of work was started by Borralleras and Rubio. Among
other material, Borralleras thesis \cite{borralleras03phd} contained a
constraint-based approach to the semantic path ordering \cite{kamin80}
which was shown to encompass the dependency pairs method of Arts and
Giesl \cite{arts00tcs,giesl04lpar} in all its various aspects. Besides
the thesis itself, the principles underlying this work are also
described in \cite{borralleras01lpar} and
\cite{borralleras07jpbirthday}. An interesting aspect is that they
lift to the higher-order case. Extending the dependency pairs method
to the higher-order case was also considered independently by Sakai {\em et
al} \cite{sakai01tis,sakai05tis} and Blanqui \cite{blanqui06wst-hodp}.

Finally, a last line of work addresses the question of proving
termination of higher-order programs. This is of course a slightly
different question, usually addressed by using abstract
interpretations. These interpretations may indeed use the general
schema or HORPO as a basic ingredient for comparing inputs of a
recursive call to those of the call they originate from. This line of
work includes
\cite{hughes96popl,chin01hosc,benamram01popl,xi02hosc,abel04ita,barthe04mscs,blanqui04rta,giesl06rta}. An 
important related work, considering pure lambda terms,
is~\cite{jones04rta}.

We believe that our quest shall be shown useful for all these lines of
work, either as a building block, or as a guiding principle.

In this paper, we first slightly improve the definition of HORPO in
the very basic case of a simple type discipline, and rename it as the
Computability Path Ordering.  We then address the treatment of
inductive types which remained {\em ad hoc} so far, therefore concluding our
quest thanks to the use of accessibility, a relationship which was
shown to generalize the notion of inductive type by Blanqui
\cite{blanqui05mscs,blanqui05fi}. We finally list which are the most important
question to be addressed for those who would like to start a new quest.

\section{Higher-Order Algebras}
\label{s:preliminaries}

Polymorphic higher-order algebras are introduced
in~\cite{jouannaud07jacm}. Their purpose is twofold: to define a
simple framework in which many-sorted algebra and typed
lambda-calculus coexist; to allow for polymorphic types for both
algebraic constants and lambda-calculus expressions. For the sake of
simplicity, we will restrict ourselves to monomorphic types in this
presentation, but allow us for polymorphic examples. Carrying out the
polymorphic case is no more difficult, but surely  more painful.

We give here the minimal set of notions to be reasonably
self-contained.

Given a set $\Sort$ of {\em sort symbols} of a fixed arity, denoted by
$s:*^n\ra *$, the set of {\em types} is generated by the constructor
$\ra$ for {\em functional types}:

\[
\begin{array}{c}
\Polytype := s(\Polytype^n) ~|~ (\Polytype \ra \Polytype)\\
\mbox{for $s: *^n\ra *~\in\Sort$}
\end{array}
\] 

Function symbols are meant to be algebraic operators equiped with a
fixed number $n$ of arguments (called the {\em arity}) of respective
types $\sigma_1,\ldots,\sigma_n$, and an
\emph{output type} $\sigma$. Let ${\F} = \biguplus_{\sigma_1,\ldots,\sigma_n,\sigma}
\F_{\sigma_1 \times\ldots \times \sigma_n \ra \sigma}$.
The membership of a given function symbol $f$ to $\F_{\sigma_1
\times\ldots \times \sigma_n \ra \sigma}$ is called a {\em type
declaration} and written $f : \sigma_1 \times\ldots \times \sigma_n
\ra \sigma$.

The set $\TFX$ of {\em raw algebraic $\lambda$-terms} is generated from
the signature ${\cal F}$ and a denumerable set $\X$ of
variables according to the grammar:
\[\Term := 
   \X ~|~ (\lambda \X : \Polytype . \Term) ~|~ @(\Term,\Term) ~|~ \F(\Term,\ldots,\Term).
\] 
The raw term $\lambdax:\sigma.u$ is an {\em abstraction} and $@(u,v)$
is an application.  We may omit $\sigma$ in $\lambdax:\sigma.u$ and
write $@(u,v_1,\ldots,v_n)$ or $u(v_1,\ldots,v_n)$, $n>0$, omitting
applications.  $\Var{t}$ is the set of free variables of $t$. A raw
term $t$ is \emph{ground} if $\Var{t}=\emptyset$.  The notation
$\vect{s}$ shall be ambiguously used for a list, a multiset, or a set
of raw terms $s_1,\ldots,s_n$.

Raw terms are identified with finite labeled trees by considering
$\lambdax :\sigma.u$, for each variable $x$ and type $\sigma$, as a
unary function symbol taking $u$ as argument to construct the raw term
$\lambdax : \sigma . u$.  {\em Positions} are strings of positive
integers.  $t|_p$ denotes the {\it subterm} of $t$ at position
$p$. We use $t\gesubt t|_p$ for the subterm relationship. The result
of replacing $t|_p$ at position $p$ in $t$ by $u$ is written $t[u]_p$.

Typable raw terms are called \emph{terms}. The typing judgements are
standard.  We categorize terms into three disjoint classes:
\begin{enumerate}
\item
\emph{Abstractions} headed by $\lambda$;
\item
\emph{Prealgebraic} terms headed by a function symbol, assuming (for
the moment) that the output type of $f\in\F$ is a base type;
\item
\emph{Neutral} terms are variables or headed by an application.
\end{enumerate}

Substitutions, rewrite rules and higher-order reduction orderings are
as expected, see~\cite{jouannaud07jacm}.

\section{The Computability Path Ordering}
\label{s:ordering}

CPO is generated from three basic ingredients: a {\em type ordering};
a {\em precedence} on functions symbols; and a {\em status} for the
function symbols. Accessibility is an additionnal ingredient originating in
inductive types, while the other three were already needed for
defining HORPO.  We describe these ingredients before defining the
computability path ordering. We define the ordering in two steps,
accessibility being used in the second step only. The first ordering
is therefore simpler, while the second is more expressive.


\subsection{Basic ingredients}
\label{ssto}

\begin{itemize}
\item
a {\em precedence} $\geF$ on symbols in $\F\cup\{@\}$, with $f\gtF @$
for all $f\in\F$.

\item
a status for symbols in $\F\cup\{@\}$ with $@\in Mul$.

\item
and a quasi-ordering on types $\geS$ called {\em the type ordering}
satisfying the following properties, where $\eqS$ denotes its
associated equivalence relation $\geS\cap\leS$ and $\gtS$ its strict
part $\geS\setminus\leS$:

\begin{enumerate}
\item
\label{wf}
{\em Well-foundedness}:
${\newgtS}={\gtS\cup\gtsubtarrow}$ is well-founded,\\ where
$\sigma\ra\tau\gtsubtarrow\sigma$;
\item
\label{ma}
{\em Right arrow subterm}:
$\sigma\ra\tau\gttype\tau$;
\item
\label{poa}
{\em Arrow preservation}: $\tau\ra\sigma\eqS\alpha$ iff
$\alpha=\tau'\ra\sigma'$, $\tau'\eqS\tau$ and $\sigma\eqS\sigma'$;
\item
\label{sa}
{\em Arrow decreasingness}: $\tau\ra\sigma\gtS\alpha$ implies
$\sigma\geS\alpha$ or else $\alpha=\tau'\ra\sigma'$, $\tau'\eqS\tau$
and $\sigma\gtS\sigma'$;
\end{enumerate}

\end{itemize}


Arrow preservation and decreasingness imply that the type ordering
does {\em not}, in general, have the left arrow subterm property:
$\sigma\ra\tau\not\!\getype \sigma$. A first axiomatic definition of
the type ordering was given in \cite{jouannaud01draft}, which did not
need right arrow subterm. A new one, expected to be easier to
understand, was given in \cite{jouannaud07jacm} based solely on
$\getype$, which uses another axiom, {\em arrow monotonicity}, to
force the right arrow subterm property. As pointed out to us recently,
this set of axioms is unfortunately inconsistent
\cite{femke08pc}. However, the restriction of the recursive path
ordering proposed there for a type ordering does not satisfy arrow
monotonicity, but does satisfy instead the corrected set of axioms
given here.

We now give two important properties of the type ordering:

\begin{lemma}\cite{jouannaud07jacm}
\label{l:pdt}
Assuming $\sigma\eqS \tau$, $\sigma$ is a data type iff
$\tau$ is a data type.
\end{lemma}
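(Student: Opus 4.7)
The plan is an essentially one-line argument invoking arrow preservation, the only nontrivial step being to notice that $\eqS$ is symmetric and so the axiom can be applied from either side. By definition, a type is a data type precisely when it is of the form $s(\vect\alpha)$ for some sort symbol $s$, equivalently, when it is not of the form $\alpha\ra\beta$. Hence the statement reduces to: $\sigma\eqS\tau$ and $\sigma$ not an arrow implies $\tau$ not an arrow.

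I would argue by contraposition. Suppose $\tau=\tau_1\ra\tau_2$. Since $\eqS$ is the intersection $\geS\cap\leS$, it is symmetric, so from $\sigma\eqS\tau$ we obtain $\tau\eqS\sigma$, that is $\tau_1\ra\tau_2\eqS\sigma$. Applying the arrow preservation axiom with $\alpha:=\sigma$, we conclude that $\sigma=\sigma_1\ra\sigma_2$ with $\sigma_1\eqS\tau_1$ and $\tau_2\eqS\sigma_2$; in particular $\sigma$ is an arrow type, hence not a data type. By contraposition, if $\sigma$ is a data type then so is $\tau$. The reverse implication follows by the same argument after swapping the roles of $\sigma$ and $\tau$, again using symmetry of $\eqS$.

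There is no real obstacle here: neither well-foundedness, right arrow subterm, nor arrow decreasingness is needed, only arrow preservation and symmetry of $\eqS$. The lemma is really a direct unpacking of axiom (\ref{poa}), stated separately because it will be used repeatedly in the sequel whenever CPO comparisons bifurcate according to whether the compared types are data types or functional types.
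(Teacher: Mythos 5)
Your argument is correct: since the grammar of $\Polytype$ makes ``data type'' synonymous with ``non-arrow type'', applying arrow preservation (axiom~\ref{poa}) to $\tau_1\ra\tau_2\eqS\sigma$ --- legitimate because $\eqS=\geS\cap\leS$ is symmetric --- forces $\sigma$ to be an arrow type, and the contrapositive plus swapping $\sigma$ and $\tau$ gives both directions. The paper itself supplies no proof here (the lemma is imported from \cite{jouannaud07jacm}), but your derivation is the intended one and uses exactly the right axiom, nothing more.
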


\begin{lemma}
\label{l:geS}
If $\alpha\ra\sigma\geS\beta\ra\tau$ then $\sigma\geS\tau$.
\end{lemma}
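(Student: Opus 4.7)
The plan is to proceed by case analysis on which of the two constituents of $\geS$ relates $\alpha\ra\sigma$ and $\beta\ra\tau$, namely the equivalence $\eqS$ or the strict part $\gtS$, and then invoke the appropriate axiom of the type ordering in each case.

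First I would handle the easy case $\alpha\ra\sigma \eqS \beta\ra\tau$. Applying arrow preservation (axiom \ref{poa}) to this equivalence, with the right-hand side already exhibited as an arrow type $\beta\ra\tau$, immediately forces $\sigma \eqS \tau$, hence $\sigma \geS \tau$.

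The interesting case is $\alpha\ra\sigma \gtS \beta\ra\tau$. Here I would invoke arrow decreasingness (axiom \ref{sa}) with the right-hand side $\beta\ra\tau$. It yields two subcases. In the first subcase we get directly $\sigma \geS \beta\ra\tau$; combining this with right arrow subterm (axiom \ref{ma}), which gives $\beta\ra\tau \gtS \tau$, and using transitivity of $\geS$ together with the compatibility $\geS \circ \gtS \subseteq \gtS$ (a standard consequence of $\gtS$ being the strict part of the quasi-ordering $\geS$), we conclude $\sigma \gtS \tau$, hence $\sigma \geS \tau$. In the second subcase, arrow decreasingness supplies a decomposition $\beta\ra\tau = \tau'\ra\sigma'$ with $\sigma \gtS \sigma'$; since the decomposition of an arrow type is unique, $\sigma' = \tau$, and we again obtain $\sigma \gtS \tau$.

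I do not anticipate any real obstacle: the statement is essentially a projection lemma on the second arrow component, and each case is dispatched by one of the four axioms on $\geS$. The only minor point to mention carefully is the compatibility of $\geS$ with $\gtS$ used in subcase~2a, which follows from the definition $\gtS = \geS \setminus \leS$ together with transitivity of $\geS$, and does not require any extra hypothesis on the type ordering.
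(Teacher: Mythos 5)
Your proof is correct and follows essentially the same route as the paper's: arrow preservation handles the $\eqS$ case, arrow decreasingness splits the $\gtS$ case, and the subcase $\sigma\geS\beta\ra\tau$ is closed via right arrow subterm and transitivity. The extra remark on the compatibility $\geS\circ\gtS\subseteq\gtS$ is sound but unnecessary, since the lemma only claims $\sigma\geS\tau$.
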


\begin{proof}
If $\alpha\ra\sigma\eqS\beta\ra\tau$ then, by arrow preservation,
$\alpha\eqS\beta$ and $\sigma\eqS\tau$. If
$\alpha\ra\sigma\gtS\beta\ra\tau$, then, by arrow decreasingness,
either $\alpha\eqS\beta$ and $\sigma\gtS\tau$, or else
$\sigma\gtS\beta\ra\tau$. In the latter case, $\beta\ra\tau\gtS\tau$ by
right arrow subterm and we conclude by transitivity.\cqfd
\end{proof}


\subsection{Notations}

Our ordering notations are as follows:

\begin{itemize}
\item
$s\gtchorpoX{X} t$ for the main ordering, with a finite set of
variables $X\subset\X$ and the convention that $X$ is omitted when
empty;
\item
$s:\sigma\gtchorpoXtype{X} t:\tau$ for $s\gtchorpoX{X} t$ and
$\sigma\getype\tau$;
\item
$l:\sigma\gtchorpoXtype{}r:\tau$ as initial call for each $l\ra r\in
R$;
\item
$s\succ\vect{t}$ is a shorthand for $s\succ u$ for all $u\in\vect{t}$;
\item
$\succeq$ is the reflexive closure of $\succ$.
\end{itemize}

We can now introduce the definition of CPO.


\subsection{Ordering definition}

\begin{definition}
\label{d:ordering-one}
$s:\sigma\gtchorpoX{X}t:\tau$ iff either:

\begin{enumerate}

\item
\label{funleft1}
$s=f(\vect{s})$ with $f\in\F$ and either of
\begin{enumerate}
\item
\label{funvar1}
$t\in X$
\item
\label{funstat1}
$t=g(\vect{t})$ with $f=_\F g\in\F$, $s\gtchorpoX{X}\vect{t}$ and
$\vect{s} (\gtchorpoXtype{})_{stat_f} \vect{t}$
\item
\label{funprec1}
$t=g(\vect{t})$ with $f\gtF g\in\F\cup\{@\}$ and
$s\gtchorpoX{X}\vect{t}$
\item
\label{funabs1}
$t=\lambday:\beta.w$ and $s\gtchorpoX{X\cup\{z\}}w\{y\mapsto z\}$ for
$z:\beta$ fresh
\item
\label{funsubt1}
$u \gechorpoXtype{}t$ for some $u\in\vect{s}$
\end{enumerate}

\item
\label{appleft1}
$s=@(u,v)$ and either of
\begin{enumerate}
\item
\label{appvar1}
$t\in X$
\item
\label{appstat1}
$t=@(u',v')$ and $\{u,v\}(\gtchorpoXtype{})_{mul} \{u',v'\}$
\item
\label{appabs1}
$t=\lambday:\beta.w$ and $s\gtchorpoX{X}w\{y\mapsto z\}$ for $z:\beta$
fresh
\item
\label{appsubt1}
$u\gechorpoXtype{X}t$ or $v\gechorpoXtype{X}t$
\item
\label{beta1}
$u=\lambdax:\alpha.w$ and $w\{x\mapsto v\}\gechorpoX{X} t$
\end{enumerate}

\item
\label{absleft1}
$s=\lambda x:\alpha.u$ and either of
\begin{enumerate}
\item
\label{absvar1}
$t\in X$
\item
\label{absstat1}
$t=\lambda y:\beta.w$, $\alpha\eqtype\beta$ and
$u\{x\!\mapsto\!z\}\gtchorpoX{X}w\{y\!\mapsto\!z\}$ for $z\!:\!\beta$
fresh
\item
\label{absprec1}
$t=\lambda y:\beta.w$, $\alpha\neqtype\beta$ and $s\gtchorpoX{X}
w\{y\mapsto z\}$ for $z:\beta$ fresh
\item
\label{abssubt1}
$u\{x\mapsto z\}\gechorpoXtype{X} t$ for
$z:\alpha$ fresh

\item
\label{eta1}
$u=@(v,x)$, $x\not\in\Var{v}$ and $v\gechorpoX{X}t$
\end{enumerate}
\end{enumerate}

\end{definition}


Because function symbols, applications and abstractions do not behave
exactly the same, we chosed to organize the definition according to
the left-hand side head symbol: a function symbol, an application, or
an abstraction successively. In all three cases, we first take care of
the case where the right-hand side is a bound variable -case named
{\mbox{\em variable}-,} then headed by a symbol which is the same as (or
equivalent to) the left-hand side head symbol -case {\em status}-, or
headed by a symbol which is strictly smaller in the precedence than
the left-hand side head symbol -case {\em precedence}-, before to go
with the -case {\em subterm}. The precedence case breaks into two
sub-cases when the left-hand side is a function symbol, because
abstractions, which can be seen as smaller than other symbols, need
renaming of their bound variable when pulled out, which makes their
treatment a little bit different formally from the standard precedence
case. There are two specific cases for \emph{application} and
\emph{abstraction}: one for beta-reduction, and one for eta-reduction,
which are both built in the definition.

This new definition schema appeared first in \cite{blanqui07lpar} in a
slightly different format. It incorporates two major innovations with
respect to the version of HORPO defined in \cite{jouannaud07jacm}.
The first is that terms can be ordered without requiring that their
types are ordered accordingly.  This will be the case whenever we can
conclude that some recursive call is terminating by using
computability arguments rather than an induction on types. Doing so,
the ordering inherits directly much of the expressivity of the
computability closure schema used in \cite{jouannaud07jacm}. The
second is the annotation of the ordering by the set of variables $X$
that were originally bound in the right-hand side term, but have
become free when taking some subterm. This allows
rules~\ref{funabs1},~\ref{appabs1} and~\ref{absprec1} to pull out
abstractions from the right-hand side regardless of the left-hand side
term, meaning that abstractions are smallest in the precedence. Among
the innovations with respect to \cite{blanqui07lpar} are
rules~\ref{absprec1}, which compares abstractions whose bound
variables have non-equivalent types, and rule~\ref{appsubt1}, whose
formulation is now stronger.

This definition suffers some subtle limitations:

\begin{enumerate}
\item
Case~\ref{funabs1} uses recursively the comparison
$s\gtchorpoX{X\cup\{z\}}w\{y\mapsto z\}$ for $z$ fresh, implying that
the occurrences of $z$ in $w$ can be later taken care of by
Case~\ref{funvar},~\ref{appvar}~or~\ref{absvar}. This is no
limitation.

Cases \ref{appabs1} and \ref{absprec1} use instead the recursive
comparison \mbox{$s\gtchorpoX{X}w\{y\!\mapsto\!z\}$}, with $z$ fresh, hence
$z\notin X$. As a consequence, these recursive calls cannot succeed if
$z\in\Var{w}$. We could have added this redundant condition for sake
of clarity. We prefered to privilege uniformity and locality of tests.

As a consequence, Cases~\ref{funabs1},~\ref{appabs1} and
\ref{absprec1} cannot be packed together as it
was unfortunately done in~\cite{blanqui07lpar}, where correct proofs
were however given which did of course not correspond to the given
definition.
\item
The subterm case~\ref{funsubt} uses recursively the comparison
$u\gtchorpoXtype{}t$ instead of the expected comparison
$u\gtchorpoXtype{X}t$.

On the other hand, the other subterm definitions, Cases~\ref{appsubt}
and~\ref{abssubt} use the expected comparisons $u\gechorpoXtype{X}t$
or $v\gechorpoXtype{X}t$ in the first case, and $u\{x\mapsto
z\}\gechorpoXtype{X}t$ in the second. This implies again that the
various subterm cases cannot be packed together.
\item
Case~\ref{funstat1} uses recursively the comparison $\vect{s}
({\gtchorpoXtype{}})_{stat_f} \vect{t}$ instead of the stronger
comparison $\vect{s} ({\gtchorpoXtype{X}})_{stat_f} \vect{t}$.
\end{enumerate}

All our restrictions are justified by their use in the
well-foundedness proof of $\gtchorpoXtype{}$. There is an even better
argument: the ordering would not be well-founded otherwise, as can be
shown by means of counter-examples. We give two below.


We start with an example of non-termination obtained when replacing
the recursive call $\vect{s}(\gtchorpoXtype{})_{stat_f}\vect{t}$ by
$\vect{s}(\gtchorpoXtype{X})_{stat_f}\vect{t}$ in Case~\ref{funstat}.

\begin{example}
\label{ex:non-termination1}
Let $a$ be a type, and $\{f:a\times a \ra a, g:(a\ra a)\ra a\}$ be the
signature. Let us consider the following non-terminating rule (its
right-hand side beta-reduces to its left-hand side in one beta-step):

\[f(g(\lambdax.f(x,x)),g(\lambdax.f(x,x)))
\ra @(\lambdax.f(x,x), g(\lambdax.f(x,x)))\]

Let us assume that $f\gtF g$ and that $f$ has a multiset status.  We
now show that the ordering modified as suggested above succeeds with
the goal

\begin{enumerate}
\item
$f(g(\lambdax.f(x,x)),g(\lambdax.f(x,x))) \gtchorpoXtype{}
@(\lambdax.f(x,x), g(\lambdax.f(x,x)))$.
\end{enumerate}

Since type checks are trivial, we will omit them, although the reader
will note that there are very few of them indeed. Our goal yields two
sub-goals by Case~\ref{funprec1}:

\begin{enumerate}
\setcounter{enumi}{1}
\item
\label{one}
$f(g(\lambdax.f(x,x)),g(\lambdax.f(x,x))) \gtchorpoX{} \lambdax.f(x,x)$ and
\item
\label{two}
$f(g(\lambdax.f(x,x)),g(\lambdax.f(x,x))) \gtchorpoX{} g(\lambdax.f(x,x))$.
\end{enumerate}

Sub-goal~\ref{one} yields by Case~\ref{funabs1}

\begin{enumerate}
\setcounter{enumi}{3}
\item
$f(g(\lambdax.f(x,x)),g(\lambdax.f(x,x))) \gtchorpoX{\{z\}}
f(z,z)$ which yields by Case~\ref{funstat1}
\item
$f(g(\lambdax.f(x,x)),g(\lambdax.f(x,x))) \gtchorpoX{\{z\}} z$ twice, solved by Case~\ref{funvar1} and
\item
\label{mulsubgoal}
$\{g(\lambdax.f(x,x)),g(\lambdax.f(x,x))\} (\gechorpoXtype{\{z\}})
\{z,z\}$ solved by Case~\ref{funvar1} applied twice.
\end{enumerate}

We are left with sub-goal~\ref{two} which yields by Case~\ref{funprec1}

\begin{enumerate}
\setcounter{enumi}{6}
\item
$f(g(\lambdax.f(x,x)),g(\lambdax.f(x,x))) \gtchorpoX{}
\lambdax.f(x,x)$, which happens to be the already solved
sub-goal~\ref{one}, and we are done.
\end{enumerate}

With the definition we gave, sub-goal~\ref{mulsubgoal} becomes:\\
$\{g(\lambdax.f(x,x)),g(\lambdax.f(x,x))\} (\gtchorpoXtype{})_{mul}
\{z,z\}$ and does not succeed since the set of previously bound
variables has been made empty.

The reader can check that chosing the precedence $g\gtF f$ yields
exactly the same result in both cases.
\cqfd
\end{example}

Next is an example of non-termination due to Cynthia Kop and
Femke van Raamsdong~\cite{femke08pc}, obtained when replacing the
recursive call $s\gtchorpoX{X}w\{y\mapsto z\}$ by
$s\gtchorpoX{X\cup\{z\}}w\{y\mapsto z\}$ in Case~\ref{appabs}.

\begin{example} 
\label{ex:nontermination2}
Let $o$ be a type, and $\{f:o\ra o, A:o, B: o\ra o\ra o\}$ be the
signature. Let us consider the following non-terminating set of rules:

\[\begin{array}{rclr}
@(@(B, A),A) &\ra& @(\lambda z:o.f(z), A) &\quad\quad (1)\\
f(A) &\ra& @(@(B,A),A)&(2)
\end{array}
\]
since
\[@(@(B, A),A) \lrps{}{1} @(\lambda z:o.f(z), A)\lrps{}{\beta}f(A)\lrps{}{2}@(@(B,A),A)\]

Let us assume that $A\gtF f\gtF B$ and consider the goals:

\begin{enumerate}
\item
\label{goal1}
$@(@(B,A),A):o \gtchorpoXtype{} @(\lambda z:o.f(z), A):o$, and
\item
\label{goal2}
$f(A):o \gtchorpoXtype{} @(@(B,A),A):o$.
\end{enumerate}

Goal~\ref{goal1} yields two sub-goals by Case~\ref{appstat}:

\begin{enumerate}
\setcounter{enumi}{2}
\item
\label{goal12}
$A:o \gechorpoXtype{} A:o$, which succeeds trivially, and 
\item
\label{goal11}
$@(B,A): o\ra o\gtchorpoXtype{} \lambda z:o.f(z) : o\ra o$ which yields by
modified Case~\ref{appabs1}:
\item
\label{goal111}
$@(B,A) \gtchorpoX{\{z\}} f(z)$, which yields in turn by Case~\ref{appsubt}
\item
\label{goal1111}
$A:o  \gtchorpoXtype{\{z\}} f(z):o$ which yields by Case~\ref{funprec}
\item
$A:o  \gtchorpoXtype{\{z\}} z:o$ which succeeds by Case~\ref{funvar}.
\end{enumerate}

Note that we have used $B$ for its large type, and $A$ for eliminating
$f(z)$, exploiting a kind of divide and conquer ability of the ordering.
We are left with goal~\ref{goal2} which yields two subgoals by
Case~\ref{funabs1}

\begin{enumerate}
\setcounter{enumi}{7}
\item
$f(A)\gtchorpoX{} A$ which succeeds by Case~\ref{funsubt}, and
\item
$f(A)\gtchorpoX{} @(B,A)$, which yields by Case~\ref{funprec}:
\item
$f(A)\gtchorpoX{} A$, which succeeds by Case~\ref{funsubt}, and
\item
$f(A)\gtchorpoX{} B$, which succeeds by Case~\ref{funprec}, therefore
ending the computation.  \cqfd
\end{enumerate}
\end{example}

\comment{
Inspired by the example of Cynthia Kop and Femke van Raamsdong,
comes a non-terminating example obtained this time when replacing
the recursive call $u\{x\mapsto z\} \gechorpoXtype{X} t$ by
$u\{x\mapsto z\} \gechorpoXtype{X\cup\{z\}} t$ in Case~\ref{absprec}.

\begin{example} 
\label{ex:nontermination3}
Let $a,b,c$ be simple types satisfying the following (non-monotonic) ordering on types
\[a\ra b \gttype b \gttype b\ra c \gttype c\]
and $\{f:(a\ra b)\ra c, g:b\ra c, B:b\}$  be the
signature satisfying $B\gtF g \gtF f$. Consider the  non-terminating set of rules:

\[\begin{array}{rclr}
f(\lambda x:a.B) &\ra& @(\lambda z:b.g(z), B) &\quad\quad (1)\\
g(B) &\ra& f(\lambda x:a.B)&(2)
\end{array}
\]
since
\[f(\lambda x:a.B) \lrps{}{1} @(\lambda z:b.g(z), B) \lrps{2}{\beta} g(B) \lrps{}{2}
f(\lambda x:a.B)\]

We now show that the modified ordering  succeeds with the
goals:

\begin{enumerate}
\item
\label{goalp1}
$f(\lambda x:a.B):c \gtchorpoXtype{} @(\lambda z:b.g(z), B):c$, and
\item
\label{goalp2}
$g(A):c \gtchorpoXtype{} f(\lambda x:a.B):c$
\end{enumerate}

Goal~\ref{goal1} yields two sub-goals by Case~\ref{funprec}, the first subgoal

\begin{enumerate}
\setcounter{enumi}{2}
\item
\label{goalp12}
$f(\lambda x:a.B) \gtchorpoX{} B$, which yields by Case~\ref{funsubt}
\item
$\lambda x:a.B : a\ra b\gtchorpoXtype{} B:b$, which yields in turn by
Case~\ref{abssubt}
\item
$B:b \gechorpoXtype{} B:b$, which succeeds trivially, and the second subgoal
\item
$f(\lambda x:a.B) \gtchorpoX{} \lambda z:b.g(z)$, which becomes by Case~\ref{funabs}
\item
$f(\lambda x:a.B) \gtchorpoX{\{z\}} g(z)$, which yields by Case~\ref{funsubt}
\item
$\lambda x:a.B :a\ra b\gtchorpoXtype{} g(z):c$ which yields by Case~\ref{abssubt}
\item
$B : b\gtchorpoXtype{} g(z):c$, and by Case~\ref{funprec}
\item
$B : b\gtchorpoXtype{} z:b$, which fails since track of $z$ has been lost.
\end{enumerate}

We therefore backtrack to subgoal 6, and try
this time Case~\ref{funsubt}:

\begin{enumerate}
\setcounter{enumi}{6}
\item
$\lambda x:a.B :a\ra b \gtchorpoXtype{} \lambda z:b.g(z): b\ra c$,
which becomes by the modified Case~\ref{absprec} (note that $a\neqtype b$)
\item
$\lambda x:a.B : a\ra b\gtchorpoXtype{\{z\}} g(z):c$, which yields by Case~\ref{abssubt}
\item
$B :b\gtchorpoXtype{\{z\}} g(z):c$, and by Case~\ref{funprec}, we get
\item
$B:b \gtchorpoX{\{z\}} z:b$, which succeeds by Case~\ref{funvar}.
\end{enumerate}

We are left with goal~\ref{goalp2}, which yields by Case~\ref{funprec}

\begin{enumerate}
\setcounter{enumi}{10}
\item
$g(A) \gtchorpoX{} \lambda x:o'.B$, then by Case~\ref{funabs}
\item
$g(A) \gtchorpoX{} B$, which yields by Case~\ref{funsubt}
\item
$A \gtchorpoXtype{} B$ , therefore ending the computation successfully.
\end{enumerate}
\end{example}
}

More examples justifying our claim that the quest has come to en end
are given in the full version of this paper.
\comment{
We could continue the list of example showing that other similar
potential improvements of the ordering yield a non-terminating
relation. Forf example, in Case~\ref{funsubt}, it is both necessary to
check types in the recursive call, and to have an empty set $X$ of
previously bound variables. Similarly, the type-cheks in
Cases~\ref{appsubt} and ~\ref{abssubt} are necessary.
}

We give now an example of use of the computability path ordering with the
inductive type of Brouwer's ordinals, whose constructor $lim$ takes an
infinite sequence of ordinals to build a new, limit ordinal, hence
admits a functional argument of type $\Nat\ra O$, in which $O$ occurs
positively.  As a consequence, the recursor admits a more complex
structure than that of natural numbers, with an explicit abstraction
in the right-hand side of the rule for $lim$. The strong normalization
proof of such recursors is known to be hard.


\begin{example}
Brouwer's ordinals.

\noindent
$0:O \quad\quad\quad S:O\ra O \quad\quad\quad lim:(\Nat\ra O)\ra O$\\
$rec : O\times\alpha\times(O\ra\alpha\ra\alpha)\times
((\Nat\ra O)\ra(\Nat\ra\alpha)\ra\alpha)\ra\alpha$

The rules defining the recursor on Brouwer's ordinals are:

$rec(0,U,X,W)\ra U$\\
$rec(S(n),U,X,W)\ra @(X,n,rec(n,U,X,W))$\\
$rec(lim(F),U,X,W)\ra @(W,F,\lambda n.rec(@(F,n),U,X,W))$

Let us try to prove that the third rule is in $\gtchorpoXtype{}$.

\begin{enumerate}
\item
$s=rec(lim(F),U,X,W) \gtchorpoXtype{} @(W,F,\lambda
n.rec(@(F,n),U,X,W))$ yields 4 sub-goals according to
Case~\ref{funprec1}:
\item
$\alpha \getype \alpha$ which is trivially satisfied, and
\item
$s\gtchorpoX{} \{W, F, \lambda n.rec(@(F,n), U, X, W)\}$
which simplifies to:
\item
$s\gtchorpoX{} W$ which succeeds by Case~\ref{funsubt1},
\item
$s\gtchorpoX{} F$, which generates by Case~\ref{funsubt1} the
comparison $lim(F)\gtchorpoXtype{} F$ which fails since $lim(F)$ has a
type which is strictly smaller than the type of $F$.
\item
$s\gtchorpoX{} \lambda n.rec(@(F,n), U, X, W)$ which yields by 
Case~\ref{funabs1}
\item
$s\gtchorpoX{\{n\}} rec(@(F,n), U, X, W)$ which yields by 
Case~\ref{funstat1}
\item
$\{lim(F),U,X,W\} (\gtchorpoXtype{})_{mul} \{@(F,n),U,X,W\}$, which
reduces to
\item
$lim(F) \gtchorpoXtype{} @(F,n)$, whose type comparison succeeds,
yielding by Case~\ref{funprec1}
\item
$lim(F) \gtchorpoX{} F$ which succeeds by Case~\ref{funsubt1}, and
\item
$lim(F) \gtchorpoX{} n$ which fails because track of $n$ has been
lost!
\end{enumerate}
\end{example}

Solving this example requires therefore: first, to access directly the
subterm $F$ of $s$ in order to avoid the type comparison for $lim(F)$
and $F$ when checking recursively whether the comparison
$s\gtchorpoX{}\lambda n.rec(@(F,n)$, $U, X, W)$ holds; and second, to
keep track of $n$ when comparing $lim(F)$ and $n$.


\subsection{Accessibility}
\label{ss:accessibility}

While keeping the same type structure, we make use here of a fourth
ingredient, the {\em accessibility} relationship for data types
introduced in \cite{blanqui00rta}. This will allow us to solve
Brouwer's example, as well as other examples of non-simple inductive
types.

We say that a data type is \emph{simple} is it is a type constant.  We
restrict here our definition of accessibility to simple data types.
To this end, we assume that all type constructors are constants, that
is, have arity zero. We can actually do a little bit more, assuming
that simple data types are not greater or equal (in $\getype$) to
non-constant data types, allowing the simple data types to live in a
separate world.

The sets of {\em positive and negative positions} in a type $\sigma$
are inductively defined as follows:

\begin{itemize}
\item
$Pos^+(\sigma)=\{\varepsilon\}$ if $\sigma$ is a simple data type
\item
$Pos^-(\sigma)=\emptyset$ if $\sigma$ is a simple data type
\item
$Pos^\delta(\sigma\ra\tau)=1\cdot Pos^{-\delta}(\sigma)\cup 2\cdot
Pos^\delta(\tau)$\\
where $\delta\in\{+,-\}$, ${-+}={-}$ and ${--}=+$ (usual rules
of signs)
\end{itemize}

Then we say that a simple data type $\sigma$ occurs (only) {\em positively}
in a type $\tau$ if it occurs only at positive positions:
$Pos(\sigma,\tau)\subseteq Pos^+(\tau)$, where $Pos(\sigma,\tau)$ is
the set of positions of the occurrences of $\sigma$ in $\tau$.

The set $Acc(f)$ of {\em accessible argument positions} of a function
symbol $f:\sigma_1\ldots\sigma_n\ra\sigma$, where $\sigma$ is a simple
data type, is the set of integers $i\in\{1,\ldots,n\}$ such that:

\begin{itemize}
\item
no simple data type greater than $\sigma$ occurs in $\sigma_i$,
\item
simple data types equivalent to $\sigma$ occurs only positively in
$\sigma_i$.
\end{itemize}

Then a term $u$ is {\em accessible} in a term $v$, written $v\gtacc
u$, iff $v$ is a pre-algebraic term $f(\vect{s})$ and there exists
$i\in Acc(f)$ such that either $u=s_i$ or $u$ is accessible in $s_i$
($\gtacc$ is transitive).

A term $u$ is accessible in a sequence of terms $\vect{v}$ iff it is
accessible in some $v\in\vect{v}$, in which case we write
$\vect{s}\gtacc u$. Note that the terms accessible in a term $v$ are
strict subterms of $v$.


We can now obtain a more elaborated ordering as follows:

\begin{definition}
\label{d:ordering-two}
$s:\sigma\gtchorpoX{X}t:\tau$ iff either:

\begin{enumerate}

\item
\label{funleft}
$s=f(\vect{s})$ with $f\in\F$ and either of
\begin{enumerate}
\item
\label{funvar}
$t\in X$
\item
\label{funstat}
$t=g(\vect{t})$ with $f=_\F g\in\F$, $s\gtchorpoX{X}\vect{t}$ and
$\vect{s}(\gtchorpoXtype{}\cup \gtchorpoXacc{X,s})_{stat_f}\vect{t}$
\item
\label{funprec}
$t=g(\vect{t})$ with $f\gtF g\in\F\cup\{@\}$ and
$s\gtchorpoX{X}\vect{t}$
\item
\label{funabs}
$t=\lambday:\beta.w$ and $s\gtchorpoX{X\cup\{z\}}w\{y\mapsto z\}$ for
$z:\beta$ fresh
\item
\label{funsubt}
$u \gechorpoXtype{}t$ for some $u\in\vect{s}$
\item
\label{funacc}
$u \gechorpoXtype{}t$ for some $u$ such that $\vect{s}\gtacc u$
\end{enumerate}

\item
\label{appleft}
$s=@(u,v)$ and either of
\begin{enumerate}
\item
\label{appvar}
$t\in X$
\item
\label{appstat}
$t=@(u',v')$ and $\{u,v\}(\gtchorpoXtype{})_{mul} \{u',v'\}$
\item
\label{appabs}
$t=\lambday:\beta.w$ and $s\gtchorpoX{X}w\{y\mapsto z\}$
for $z:\beta$ fresh
\item
\label{appsubt}
$u\gechorpoXtype{X}t$ or $v\gechorpoXtype{X}t$
\item
\label{beta}
$u=\lambdax:\alpha.w$ and $w\{x\mapsto v\}\gechorpoX{X} t$
\end{enumerate}

\item
\label{absleft}
$s=\lambda x:\alpha.u$ and either of
\begin{enumerate}
\item
\label{absvar}
$t\in X$
\item
\label{absstat}
$t=\lambda y:\beta.w$, $\alpha\eqtype\beta$ and
$u\{x\!\mapsto\!z\}\gtchorpoX{X}w\{y\!\mapsto\!z\}$ for $z\!:\!\beta$
fresh
\item
\label{absprec}
$t=\lambda y:\beta.w$, $\alpha\neqtype\beta$ and
$s\gtchorpoX{X}w\{y\mapsto z\}$ for $z:\beta$ fresh
\item
\label{abssubt}
$u\{x\mapsto z\}\gechorpoXtype{X} t$ for
$z:\alpha$ fresh
\item
\label{eta}
$u=@(v,x)$, $x\not\in\Var{v}$ and $v\gechorpoX{X}t$
\end{enumerate}
\end{enumerate}

\noindent
where $u:\sigma\gtchorpoXacc{X,s}t:\tau$ iff $\sigma\geS\tau$,
$t=@(v,\vect{w})$, $u\gtacc v$ and $s\gtchorpoX{X}\vect{w}$.
\end{definition}

The only differences with the previous definition are in
Case~\ref{funstat} of the main definition which uses an additional
ordering $\gtchorpoXacc{X,s}$ based on the accessibility relationship
$\gtacc$ to compare subterms headed by equivalent function symbols,
and in Case~\ref{funacc} which uses the same relationship $\gtacc$ to
reach deep subterms that could not be reached otherwise. Following up
a previous discussion, notice that we have kept the same formulation
in Cases~\ref{appabs} and~\ref{absprec}, rather than use the easier
condition $y\not\in\Var{w}$.

We could of course strengthen $\gtchorpoXacc{X,s}$ by giving
additional cases, for handling abstractions and function symbols on
the right~\cite{blanqui00rta,blanqui07jpbirthday}.  We could also
think of improving Case~\ref{funsubt} by replacing $\vect{s}\gtacc u$
by the stronger condition $\vect{s} \gtchorpoXacc{X,s} u$. We have not
tried these improvements yet.

We now revisit Brouwer's example, whose strong normalization proof is
checked automatically by this new version of the ordering:


\begin{example}
Brouwer's ordinals.

We skip goals 2,3,4 which do not differ from the previous attempt.

\begin{enumerate}
\item
$s=rec(lim(F),U,X,W) \gtchorpoXtype{} @(W, F, \lambda n . rec(@(F,n),
U, X, W))$ yields 4 sub-goals according to Case~\ref{funprec1}:
\setcounter{enumi}{4}
\item
$s\gtchorpoX{} F$, which succeeds now by Case~\ref{funacc},
\item
$s\gtchorpoX{} \lambda n.rec(@(F,n), U, X, W)$ which yields by 
Case~\ref{funabs1}
\item
$s\gtchorpoX{\{n\}} rec(@(F,n), U, X, W)$ which
yields goals 8 and 12 by Case~\ref{funstat1}
\item
$\{lim(F), U, X, W\}
(\gtchorpoXtype{}\cup\gtchorpoXacc{\{n\},s})_{mul} \{@(F,n),U,X,W\}$,
which reduces to
\item
$lim(F) \gtchorpoXacc{\{n\},s} @(F,n)$ which succeeds since $O\eqtype
O$, $F$ is accessible in $lim(F)$ and $s\gtchorpoX{\{n\}}n$ by case
Case~\ref{funvar}. Our remaining goal
\item
$s\gtchorpoX{\{n\}} \{@(F,n),U,X,W\}$\\ decomposes into three goals
trivially solved by Case~\ref{funsubt}, that is
\item
$s\gtchorpoX{\{n\}} \{U,X,W\}$, and one additional goal
\item
$s\gtchorpoX{\{n\}} @(F,n)$ which yields two goals by Case~\ref{funprec}
\item
$s\gtchorpoX{\{n\}} F$, which succeeds by Case~\ref{funacc}, and
\item
$s\gtchorpoX{\{n\}} n$ which succeeds by
Case~\ref{funvar}, thus ending the computation.
\end{enumerate}
\end{example}

\section{Conclusion}
\label{s:conclusion}

An implementation of CPO with examples is available from the web page
of the third author.

There is still a couple of possible improvements that deserve to be
explored thoroughly: change -if possible at all- the recursive calls
of Cases~\ref{funsubt},~\ref{appabs} and~\ref{absprec} of the
definition of CPO as discussed in Section~\ref{s:ordering}; ordering
$\F\cup\{@\}$ arbitrarily -this would be useful for some examples,
e.g., some versions of Jay's pattern calculus \cite{jay04toplas};
increasing the set of accessible terms; and improve the definition of
the accessibility ordering $\gtchorpoXacc{X}$, possibly by making it
recursive.

A more challenging problem to be investigated then is the
generalization of this new definition to the calculus of constructions
along the lines of \cite{walukiewicz03jfp} and the suggestions made
in \cite{jouannaud07jacm}, where an RPO-like ordering on types was
proposed which allowed to give a single definition for terms and
types. Starting this work with definition~\ref{d:ordering-one} is of
course desirable.

Finally, it appears that the recursive path ordering and the
computability closure are kind of dual of each other: the definitions
are quite similar, the closure constructing a set of terms while the
ordering deconstructs terms to be compared, the basic case being the
same: bound variables and various kinds of subterms. Besides, the
properties to be satisfied by the type ordering, which were infered
from the proof of the computability predicates, almost characterize a
recursive path ordering on the first-order type structure.  An
intriguing, challenging question is therefore to understand the
precise relationship between computability predicates and path
orderings.

\medskip

{\bf Acknowledgements:} the second author wishes to point out the
crucial participation of Mitsuhiro Okada to the very beginning of this
quest, and to thank Makoto Tatsuta for inviting him in december 2007
at the National Institute for Informatics in Tokyo, whose support
provided him with the ressources, peace and impetus to conclude this
quest with his coauthors.  We are also in debt with Cynthia Kop and
Femke van Raamsdonk for pointing out to us a (hopefully minor) mistake
in published versions of our work on HORPO.

\bibliography{main}

\end{document}